\newtheorem{theorem}{Theorem}
\newtheorem{lemma}{Lemma}
\DeclareMathOperator*{\argmax}{argmax}
\DeclareMathOperator*{\argmin}{argmin}
\DeclareMathOperator*{\minimize}{minimize}
\DeclareMathOperator*{\Cov}{Cov}
\begin{document}

\twocolumn[
\shapleytitle{Efficient nonparametric statistical inference on population feature importance \\ using Shapley values}

\shapleysetsymbol{equal}{*}

\begin{shapleyauthorlist}
\shapleyauthor{Brian D.~Williamson}{equal,1}
\shapleyauthor{Jean Feng}{equal,2}
\end{shapleyauthorlist}

\shapleyaffiliation{1}{Vaccine and Infectious Disease Division, Fred Hutchinson Cancer Research Center, Seattle, WA}
\shapleyaffiliation{2}{Department of Biostatistics, University of Washington, Seattle, WA}

\shapleycorrespondingauthor{Brian D.~Williamson}{bwillia2@fredhutch.org}

\shapleykeywords{Variable importance, Shapley values, nonparametric statistics, statistical inference}

\vskip 0.3in
]

\printAffiliationsAndNotice{\shapleyEqualContribution}

\begin{abstract}
    The true population-level importance of a variable in a prediction task provides useful knowledge about the underlying data-generating mechanism and can help in deciding which measurements to collect in subsequent experiments.
    Valid statistical inference on this importance is a key component in understanding the population of interest.
    We present a computationally efficient procedure for estimating and obtaining valid statistical inference on the \textbf{S}hapley \textbf{P}opulation \textbf{V}ariable \textbf{I}mportance \textbf{M}easure (SPVIM).
    Although the computational complexity of the true SPVIM scales exponentially with the number of variables, we propose an estimator based on randomly sampling only $\Theta(n)$ feature subsets given $n$ observations.
    We prove that our estimator converges at an asymptotically optimal rate.
    Moreover, by deriving the asymptotic distribution of our estimator, we construct valid confidence intervals and hypothesis tests.
    Our procedure has good finite-sample performance in simulations, and for an in-hospital mortality prediction task produces similar variable importance estimates when different machine learning algorithms are applied.
\end{abstract}

\section{Introduction}\label{sec:intro}

In many scientific applications, understanding the intrinsic predictive value of a variable can shed light on the internal mechanisms relating the variable to the outcome of interest, help build future models, and guide experimental design.
For example, hospital administrators may want to know the important features to collect for predicting patient outcomes.
Likewise, vaccine researchers may want to know the most important molecular phenotypes to measure that are most predictive of binding or vaccine efficacy \citep[see, e.g.,][]{dunning2006}.
Variable importance measures (VIMs) provide necessary information towards answering these questions.

Our interest here is in statistical inference on the population VIM.
This VIM quantifies the predictive value of a variable within the oracle prediction model $f_{0}$ defined relative to an arbitrary predictiveness measure $V$.
For many choices of $V$, $f_0$ is either the conditional mean outcome given covariates (e.g., if $V = R^2$) or a simple functional of this conditional mean (e.g., if $V = $ classification accuracy).
We note that population VIMs are distinct from algorithmic VIMs, which describe the importance of a variable within a fitted model $\hat{f}$ \citep[see, e.g.,][]{breiman2001, garson1991, murdoch2019}.
Although algorithmic VIMs have been used as a proxy for population VIMs out of convenience, differences between $\hat{f}$ and $f_0$ can often lead to substantially different interpretations of the resulting VIMs.
Whereas an algorithmic VIM necessarily varies across fitted models, a population VIM is independent of the specific procedure used to estimate $f_0$.

Existing population VIMs suffer from a number of issues.
Traditionally, population VIMs have relied on restrictive parametric assumptions \citep[e.g., $R^2$ in linear models; see, e.g.,][]{gromping2007,nathans2012}, which can lead to misleading results if the parametric model does not hold.
Recent work has focused on extending these definitions by removing the parametric assumptions \citep{williamson2020b}; however, these definitions define importance of a variable with respect to the others and assign near-zero importance when features are highly correlated.
Other VIMs require strong assumptions on the design to be valid (e.g., ANOVA), but again fail in simple cases with correlated variables.
To address this, \citet{owen2017} proposed using Shapley values to quantify the population VIM, where the value function is the variance explained; these VIMs inherit many desirable theoretical properties from the Shapley value.
In fact, contemporary work has also defined the ideal estimand of algorithmic VIM estimation procedures to be the Shapley population VIM (SPVIM) \citep{covert2020}.

Unfortunately, exact estimation of SPVIM is computationally intractable in general settings \citep{owen2017}: the SPVIM is defined as the sum of $2^p$ terms, where $p$ is the number of features and each term depends on estimating the conditional mean function with respect to a unique feature subset.
Previous approaches have either suggested sampling as many subsets as possible to estimate the Shapley value \citep[see, e.g.,][]{castro2009} or utilized special properties of tree estimators to reduce the number of subsets required \citep{lundberg2020}.
Notably, \citet{strumbelj2014} analyzed the asymptotic distribution of a sampling-based estimator of Shapley algorithmic variable importance to derive confidence intervals.

In this paper, we combine the aforementioned developments and provide a nonparametric statistical inference procedure for SPVIM.
We generalize previous definitions of SPVIM and use an arbitrary measure of predictiveness $V$.
We tackle the computational complexity of the problem by randomly sampling feature subsets according to the Shapley value weights and then fitting corresponding models.
We derive the asymptotic distribution of this sampling-based SPVIM estimator and show that the error from our proposed procedure can be decomposed into two components: the error from estimating the oracle prediction models and the error from omitting summands from the Shapley value estimand.
Given $n$ training observations, we find that our estimator only needs to sample $m = \Theta(n)$ subsets to converge at an asymptotically optimal rate.
Moreover, since the subset sampling distribution is highly skewed, the number of \textit{unique} feature subsets is much smaller than $m$ in practice.
We then use the asymptotic distribution to construct asymptotically unbiased point estimates, valid confidence intervals, and hypothesis tests with proper type I error control.

We demonstrate the validity of our approach in a simulation study and estimate the SPVIM of hospital measurements for predicting mortality in the intensive care unit (ICU).
All numerical results can be replicated using code available on GitHub at \texttt{bdwilliamson/spvim\_supplementary}; the proposed methods are also implemented in the Python package \texttt{vimpy} and the R package \texttt{vimp}.

\section{Variable importance}
\label{sec:vimp}
\subsection{Data structure and notation}\label{sec:prelim}
Let $\mathcal{M}$ be a nonparametric class of joint distributions over covariates $X = (X_1, \ldots, X_p) \in \mathcal{X}  \subseteq \mathbb{R}^p$ and response $Y \in \mathcal{Y} \subseteq \mathbb{R}$, where $\mathcal{X}$ and $\mathcal{Y}$ denote the sample spaces of $X$ and $Y$, respectively.
Suppose that each observation $O$ consists of $(X, Y)$.
In this article, we consider observations $O_1,\dots,O_n$ drawn independently according to a joint probability distribution $P_0 \in \mathcal{M}$.

Next, we define the feature subsets and oracle prediction models of interest.
We take $\mathcal{S}$ to be the power set of $N := \{1, \ldots, p\}$.
Let $s_{(j)}$ for $j = 1,\ldots,2^p$ be an ordered sequence of the subsets in $\mathcal{S}$, where $s_{(1)} = \emptyset$ and $s_{(2^p)} = N$.
For any index set $s \in \mathcal{S}$, we denote by $\mathcal{X}_s$ and $\mathcal{X}_{-s}$ the sample spaces of $X_s$ and $X_{-s}$, respectively.
We denote by $a_s$ and $a_{-s}$ the elements of a vector $a$ with indices in $s$ and not in $s$, respectively.
We also consider the binary vector $z(s) \in \mathbb{R}^{p+1}$ for each $s \in \mathcal{S}$, where $z(s)_1 = 1$ for all $s \in \mathcal{S}$ and $z(s)_{k+1} = I(k \in s)$ for $k = 1, \ldots, p$.
Finally, we consider a rich class $\mathcal{F}$ of functions from $\mathcal{X}$ to $\mathcal{Y}$ endowed with a norm $\lVert \cdot \rVert_\mathcal{F}$.
For any $s \in \mathcal{S}$, we define the subset $\mathcal{F}_s := \{f \in \mathcal{F}: f(u) = f(v) \text{ for all } u, v \in \mathcal{X} \text{ satisfying } u_s = v_s\}$ of functions in $\mathcal{F}$ whose evaluation ignores elements of the input $x$ with index not in $s$.
In all examples we consider, we take $\mathcal{F}$ to be a rich class of functions that is essentially unrestricted up to regularity conditions.

\subsection{Oracle predictiveness}\label{sec:pred}
We define the importance of a variable at the population level in terms of its oracle predictiveness.
This predictiveness is measured by a real-valued functional $V: \mathcal{F} \times \mathcal{M} \mapsto \mathbb{R}$.
We assume that larger values of $V(f, P)$ imply higher predictiveness.
Examples of predictiveness measures --- including $R^2$, deviance, area under the ROC curve, and classification accuracy --- are provided in \citet{williamson2020b}.

The oracle predictiveness is the maximum achievable predictiveness over a class of prediction functions.
More formally, we define the \textit{total oracle predictiveness}
$
v_{0,N} := \max_{f \in \mathcal{F}} V(f, P_0)
$
and its associated oracle prediction function $f_{0,N} \in \argmax_{f \in \mathcal{F}} V(f, P_0)$.
For many machine learning algorithms, $f_{0,N}$ is the target of interest.
We further define the oracle prediction function $f_{0,s}$ that maximizes $V(f, P_0)$ over all $f \in \mathcal{F}_s$; the \textit{marginal oracle predictiveness} $v_{0, s} := V(f_{0,s}, P_0)$ quantifies the prediction potential of features with index in $s$.
The \textit{null oracle predictiveness} $v_{0,\emptyset} := V(f_{0,\emptyset}, P_0)$ quantifies the prediction potential of a model that uses no covariate information.
Finally, let $v_0 := [v_{0,\emptyset}, v_{0,s_{(2)}},\ldots,v_{0,N}]^\top$ denote the $2^p$-dimensional vector of predictiveness measures for all subsets in $\mathcal{S}$.
The predictiveness measure $v_{0,s_{(j)}}$ is defined relative to the population $P_0$, a joint distribution in the nonparametric statistical model $\mathcal{M}$; thus, its interpretation is tied to neither any particular estimation procedure nor any parametric assumptions.

\subsection{The Shapley population variable importance measure}\label{sec:shap}

We now define a population VIM using the classical form of the Shapley value \citep[see, e.g.,][]{shapley1953,charnes1988} with an arbitrary measure of predictiveness $V$.
Specifically, the \textit{Shapley population variable importance measure} (SPVIM) of the variable $X_j$ is the average gain in oracle predictiveness from including feature $X_j$ over all possible subsets:
\begin{align}\label{eq:class_shap}
\hspace{-0.1in}
    \psi_{0, 0,j} &:= \sum_{s \in N \setminus \{j\}}\frac{1}{p} \binom{p-1}{\lvert s \rvert}^{-1}\{V(f_{0, s\cup j}, P_0) - V(f_{0,s}, P_0)\},
\end{align}
where the indices of $\psi$ describe the number of subsets, the distribution $P_0$, and the feature of interest $j$, respectively.
We use the index 0 to indicate that the SPVIM is computed using all subsets and the true distribution $P_0$.
SPVIMs inherit the following properties from Shapley values \citep{shapley1953}:
\begin{itemize}
	\item Non-negativity: by construction, $\psi_{0,0,j} \geq 0$.
	\item Additivity\footnote{In the Shapley value literature, this additivity property is referred to as ``efficiency''. However, this notion of efficiency is very different from statistical efficiency, which is related to the asymptotic variance of a statistical estimator.}: the sum of the SPVIMs across all variables is equal to the difference between the total and null oracle predictiveness,
		\begin{align}\label{eq:additivity}
	       \sum_{j=1}^p \psi_{0,0,j} = v_{0,N} - v_{0, \emptyset}
    \end{align}
	\item Symmetry: if $X_i = X_j$, then $\psi_{0,0,i} = \psi_{0,0,j}$.
	\item Null feature: if $X_j$ provides no added predictive value, i.e., $v_{0,s\cup j} = v_{0,s}$ for all $s \subseteq (N \setminus \{j\})$, then its SPVIM value is $\psi_{0,0,j} = 0$.
	\item Linearity: if $\tilde{V} \equiv \alpha V$, then its associated SPVIM values are $\tilde{\psi}_{0,0,j} = \alpha \psi_{0,0,j}$ for all $j \in \{1, \ldots, p\}$.
\end{itemize}
Because SPVIMs satisfy these properties, they clearly address the issue of correlated features: given collinear variables $X_{j}$ and $X_{k}$ that are each marginally predictive, previous nonparametric population VIMs \citep[see, e.g., ][]{williamson2020b} would assign zero importance to both variables whereas SPVIM would assign the same positive value to both variables.

In this paper, we take advantage of an alternate formulation of the Shapley value noted in previous work \citep[see, e.g.,][]{charnes1988, lundberg2017}.
In particular, we can rewrite the weighted average in \eqref{eq:class_shap} as the solution of a weighted linear regression problem, where we treat the predictiveness of a feature subset $v_{0,s}$ as the response and the subset membership $z(s)$ as the covariates.
Define a diagonal matrix of weights $W \in \mathbb{R}^{2^p \times 2^p}$ where $W_{1,1} = W_{2^p,2^p} = 1$, and for any $j \in 2, \ldots, 2^p - 1$, $W_{j,j} = \binom{p-2}{|s_{(j)}|-1}^{-1}$.
The matrix $Z \in \mathbb{R}^{2^p \times (p+1)}$ consists of the stacked $z(s)$ vectors for each $s \in \mathcal{S}$.
Setting $\psi_{0, 0,\emptyset} := v_{0,\emptyset}$, we denote by $\psi_{0,0}$ the $(p+1)$-dimensional vector of population Shapley values.
Then \eqref{eq:class_shap} is equivalent to
\begin{align}\label{eq:wls}
    \psi_{0, 0} := & \argmin_{\psi \in \mathbb{R}^{p+1}} \lVert \sqrt{W}(Z\psi - v_0) \rVert_2^2,
\end{align}
a result that we prove in the Supplement.
If we define the distribution $Q_0$ over subsets $\mathcal{S}$ with probability mass function assigning weight $\binom{p-2}{\lvert S \rvert - 1}^{-1}$ for $S \in \mathcal{S} \setminus \{\emptyset, N\}$ and weight 1 for $S \in \{\emptyset, N\}$ (scaled so that the weights sum to one), then \eqref{eq:wls} is equivalent to a population average:
\begin{align*}
    \psi_{0,0} \equiv & \argmin_{\psi \in \mathbb{R}^{p+1}} E_{Q_0} \left[
    (z(S)\psi - v_{0,S})^2
    \right] .
\end{align*}
We will use this fact in our estimation procedure below.

\section{Estimation and inference}
\subsection{Plug-in estimation}\label{subsec:plug-in}
We now discuss how to estimate the SPVIM values for all $p$ features using independent observations $O_1, \ldots, O_n$ drawn from $P_0$.
Definition \eqref{eq:wls} suggests considering an estimator based on plugging in estimators of each individual component.
We discuss each component in turn.

First, we estimate the predictiveness measure $v_{0,s} = V(f_{0,s}, P_0)$ for a subset $s \in \mathcal{S}$ by plugging in estimates of the oracle function $f_{0,s}$ and the distribution $P_0$.
A simple approach is to partition the data into a training set and a validation set, construct an estimator $f_{n,s}$ for $f_{0,s}$ on the training data (using only the observed covariates in $s$), and estimate $P_0$ using the empirical distribution of the validation set $P_V$.
Using this training-validation split, our estimate of predictiveness is then
\begin{align}\label{eq:tain_val_pred}
v_{n,s} = V(f_{n,s}, P_V).
\end{align}
An alternative approach is to perform $K$-fold cross-fitting, where we partition the data into $K$ subsets of roughly equal size and, for each $k \in \{1,\ldots,K\}$, construct an estimator $f_{k,n,s}$ based on all the data except for the $k$th subset.
Let $P_{k,n}$ be the empirical distribution of the $k$th subset.
Then we could estimate $v_{0,s}$ using
\begin{align}\label{eq:kfold_pred}
	v_{n,s} = \frac{1}{K}\sum_{k=1}^K V(f_{k,n,s}, P_{k,n}).
\end{align}

If we had the entire estimated vector of predictiveness measures $v_n$, we could estimate $\psi_{0,0}$ using the plug-in estimator
\begin{align}
    \psi_{0,n}  &:= \argmin_{\psi \in \mathbb{R}^{p+1}} E_{Q_0} \left[
    (Z(S) \psi - v_{n,S})^2
    \right].
\label{eq:pop_plugin}
\end{align}
Unfortunately, obtaining $v_n$ requires training $2^p$ models, rendering this a computationally intractable task in general.

Instead, we can replace $Q_0$ in \eqref{eq:pop_plugin} with an empirical distribution estimator $Q_m$ obtained by sampling $m$ subsets from $\mathcal{S}$ according to $Q_0$.
This leads us to the SPVIM estimator $\psi_{m,n}$ which solves the constrained least squares problem
\begin{align}
\min_{\psi \in \mathbb{R}^{p+1}} & E_{Q_m} \left[
(Z(S) \psi - v_{n,S})^2
\right]
\text{subject to } G\psi = c_n,
\label{eq:plugin_shap}
\end{align}
where $G := [z(\emptyset)^\top, z(N)^\top]^\top \in \mathbb{R}^{2 \times (p + 1)}$ and $c_n := [v_{n,\emptyset}, v_{n,N}]^\top \in \mathbb{R}^2$.
The constraint ensures that the estimated SPVIMs satisfy the additivity property~\eqref{eq:additivity} and that the estimated SPVIM for the null set is the estimated null predictiveness value.

This constrained least squares problem can be solved by forming a Lagrangian and inverting its Karush-Kuhn-Tucker (KKT) conditions \citep{boyd2004}.
More specifically, let $s_1,\ldots, s_{\ell}$ be the unique subsets in $Q_m$.
Let $W_m$ be the $\ell\times \ell$ diagonal matrix where the $k$th diagonal element is the probability mass of $s_{k}$ in $Q_m$.
Let $v_{m,n} = (v_{n,s_1}, \ldots, v_{n,s_{\ell}})$ be the estimated predictiveness measures for the $\ell$ subsets.
Let $Z_m$ be the stack of vectors $z(s_1), \ldots, z(s_{\ell})$.
Then \eqref{eq:plugin_shap} can also be written as
\begin{align*}
\min_{\psi \in \mathbb{R}^{p+1}} \left \| \sqrt{W_m} \left( Z_m \psi - v_{m,n} \right ) \right \|_2^2 \text{subject to } G\psi = c_n.
\end{align*}
Solving the KKT conditions with Lagrange multipliers denoted by $\lambda$, we obtain a closed-form SPVIM estimator:
\begin{align}
\left[
\begin{matrix}
\psi_{m,n}\\
\lambda
\end{matrix}
\right]
=
\left[
\begin{matrix}
2 Z_m^\top W_m Z_m & G^\top\\
G & 0
\end{matrix}
\right]
^{-1}
\left[
\begin{matrix}
2 \sqrt{W_m} v_{m,n}\\
c_n
\end{matrix}
\right].
\label{eq:kkt_main}
\end{align}
To ensure that \eqref{eq:plugin_shap} has a unique solution, we select a sufficiently large value of $m$ so that $Q_m$ inclues at least $p + 1$ unique subsets.
The full estimation procedure is given in Algorithm~\ref{alg:spvi_cls}.

We now describe the properties listed in Section~\ref{sec:shap} that are satisfied by this sampling-based SPVIM estimator.
It is easy to see that the additivity, symmetry, and linearity properties always hold.
One possible concern is that the nonnegativity property can be violated.
Nevertheless, in practice we find that negative SPVIM estimates are close to zero and the 95\% confidence intervals cover zero.
If nonnegativity is truly a concern, one can also add a nonnegative constraint to \eqref{eq:plugin_shap}.
Finally, the null feature property holds with respect to \textit{estimated} predictiveness values and the sampled subsets.
Note that this property is only relevant for discrete predictiveness measures like 0-1 classification accuracy, since the estimated predictiveness values are rarely exactly the same for continuous predictiveness measures like $R^2$.

The plug-in estimator $\psi_{m,n}$ is appealing due to its simplicity.
In general, however, such an estimator may fail to be consistent at rate $n^{-1/2}$ if the population optimizers $f_{0,s}$ are flexibly estimated.
This phenomenon is due in large part to the optimal bias-variance tradeoff for estimating $f_{0,s}$ differing in general from the optimal bias-variance tradeoff for estimating $v_{n,s}$.
Plug-in estimators typically inherit much of the bias from estimating $f_{0,s}$, and this bias does not in general tend to zero sufficiently fast to allow $n^{-1/2}$-rate estimation of $\psi_{0,0}$ \citep{williamson2020a}.
In the next section, we extend the results of \citet{williamson2020b} to describe conditions under which the estimator $\psi_{m,n}$ is asymptotically normal.

\begin{algorithm}
    \caption{Estimation of SPVIM}
    \label{alg:spvi_cls}
    \begin{algorithmic}[1]
        \STATE Input initial parameter $\gamma \ge 1$.
        \STATE Sample $m= \gamma n$ subsets from $Q_0$, denoted $s_1,\ldots,s_m$.
        \STATE Estimate prediction functions $f_{n,s}$ for each $s \in \{s_1,\ldots,s_m\} \cup \{\emptyset, N\}$.
        \STATE Compute predictiveness estimates $v_{n,s}$ for $s \in \{s_1,\ldots,s_m\} \cup \{\emptyset, N\}$ using a training-validation split (see Equation \eqref{eq:tain_val_pred}) or $K$-fold cross-fitting (see Equation \eqref{eq:kfold_pred}).
        \STATE Solve for $\psi_{m,n}$ using Equation \eqref{eq:kkt_main}.
    \end{algorithmic}
\end{algorithm}

\subsection{Large-sample inferential properties}\label{sec:inf}

We now study the conditions under which $\psi_{m,n}$ is an asymptotically normal estimator of the SPVIM $\psi_{0,0}$.
Using these conditions, we can design a procedure to construct confidence intervals and hypothesis tests.
To do this, we decompose the error of our estimator $\psi_{m,n}$ into the following components:
\begin{align}\label{eq:decomposition}
    \psi_{m,n} - \psi_{0,0} =& \ (\psi_{0,n} - \psi_{0,0}) + (\psi_{m,0} - \psi_{0,0}) + r_{m,n},
\end{align}
where $\psi_{m,0}$ is obtained by replacing $v_{n,S}$ with $v_{0,S}$ in \eqref{eq:plugin_shap} and $r_{m,n} := (\psi_{m,n} - \psi_{m,0}) - (\psi_{0,n} - \psi_{0,0})$.
Each term on the right-hand side of \eqref{eq:decomposition} can then be studied separately to determine the large-sample behavior of $\psi_{m,n}$.
The first term is the error of the estimator $\psi_{0,n}$ \eqref{eq:pop_plugin} constructed using prediction functions $f_{n,s}$ estimated using $n$ observations for all subsets $s$.
The second term is the error of the estimator $\psi_{m,0}$ constructed using oracle prediction functions for sampled subsets in $Q_m$.
In other words, the first term characterizes the error contribution from sampling training observations and the second term characterizes the error contribution from sampling subsets.
The third term is a difference-in-differences remainder term that we prove to be negligible under some regularity conditions.
Based on this decomposition, we will show that the asymptotic variance of $\sqrt{n}(\psi_{m,n} - \psi_{0,0})$ is simply the sum of the asymptotic variances of the first and second error terms.

Our result makes use of several conditions that require additional notation.
These conditions were initially provided in \citet{williamson2020b}.
We define the linear space $\mathcal{R}:=\{c(P_1-P_2):c\in\mathbb{R},P_1,P_2\in\mathcal{M}\}$ of finite signed measures generated by $\mathcal{M}$.
For any $R\in\mathcal{R}$, e.g., $R=c(P_1-P_2)$, we consider the supremum norm $\|R\|_\infty:=|c|\sup_{o}|F_1(o)-F_2(o)|$, where $F_1$ and $F_2$ are the distribution functions corresponding to $P_1$ and $P_2$, respectively.
Next, we define the following notation for each subset $s \in \mathcal{S}$.
For distribution $P_{0,\epsilon} := P_0 + \epsilon h$ with $\epsilon \in \mathbb{R}$ and $h \in \mathcal{R}$, we define $f_{0,\epsilon,s} = f_{P_{0,\epsilon},s}$ to be its corresponding oracle prediction function with respect to subset $s$.
Let $\dot{V}(f,P_0;h)$ denote the G\^{a}teaux derivative of $P\mapsto V(f,P)$ at $P_0$ in the direction $h\in\mathcal{R}$, and define the random function $g_{n,s}:o\mapsto\dot{V}(f_{n,s},P_0;\delta_o-P_0)-\dot{V}(f_{0,s},P_0;\delta_o-P_0)$, where $\delta_o$ is the degenerate distribution on $\{o\}$.
Consider the following set of deterministic [(A1)--(A4)] and stochastic [(B1)--(B2)] conditions for each subset $s \in \mathcal{S}$:
\begin{enumerate}
     \item[(A1)] (\textit{optimality}) there is some $C>0$ such that for each sequence $f_1,f_2, \dots \in \mathcal{F}_s$ with $\lVert f_j - f_{0,s} \rVert_{\mathcal{F}_s} \to 0$, there is a $J$ such that for all $j > J$, $\lvert V(f_j,P_0) - V(f_{0,s},P_0) \rvert  \leq C\lVert f_j - f_{0,s} \rVert_{\mathcal{F}_s}^2$;
     \item[(A2)] (\textit{differentiability}) there is some $\delta > 0$ such that for each sequence $\epsilon_1,\epsilon_2,\ldots\in\mathbb{R}$ and $h, h_1, h_2, \ldots \in \mathcal{R}$ satisfying that $\epsilon_j\rightarrow 0$ and $\lVert h_j - h \rVert_{\infty} \to 0$, it holds that
     \begin{align*}
         \sup_{f\in\mathcal{F}_s:\lVert f - f_{0,s} \|_{\mathcal{F}_s} < \delta}\bigg\lvert & \frac{V(f, P_0 + \epsilon_j h_j) - V(f, P_0)}{\epsilon_j}  \\
         &- \dot{V}(f,P_0; h_j) \bigg\rvert \longrightarrow 0\ ;
     \end{align*}
     \item[(A3)] (\textit{optimizer continuity}) $\lVert f_{0,\epsilon,s} - f_{0,s} \rVert_{\mathcal{F}_s} = o(\epsilon)$ for each $h \in \mathcal{R}$;
     \item[(A4)] (\textit{derivative continuity}) $f \mapsto \dot{V}(f, P_0; h)$ is continuous at $f_{0,s}$ relative to $\mathcal{F}_s$ for each $h \in \mathcal{R}$;
     \item[(B1)] (\textit{minimum rate of convergence}) $\|f_{n,s}-f_{0,s}\|_{\mathcal{F}_s}=o_P(n^{-1/4})$;
     \item[(B2)] (\textit{weak consistency}) $E_0[\int \{g_{n,s}(o)\}^2dP_0(o)]=o_P(1)$;
\end{enumerate}
The G\^ateaux derivative $\dot{V}$ is provided in \citet{williamson2020b} for several common measures of predictiveness, including classification accuracy, AUC, and $R^2$.
Assuming conditions (A1)--(A4) and (B1)--(B2) hold for every subset in $\mathcal{S}$, $v_n$ is an asymptotically linear estimator of $v_0$ with influence function $\dot{V}_0: o \mapsto [\dot{V}(f_{0,\emptyset}, P_0; \delta_o - P_0),\ldots,\dot{V}(f_{0,N}, P_0; \delta_o - P_0)]^\top$ by Theorem 2 in \citet{williamson2020b}.
Finally, we introduce a condition that specifies the number of subsets to sample:
\begin{enumerate}
    \item[(C1)] (\textit{minimum number of subsets}) For $\gamma > 0$ and sequence $\gamma_1, \gamma_2, \ldots \in \mathbb{R}^+$ satisfying that $\lvert \gamma_j - \gamma \rvert \to 0$, $m = \gamma_n n$.
\end{enumerate}
For convenience, we define several objects that simplify the notation in our main result below.
Set $A := Z^\top W Z$, where $Z$ is the stack of vectors $z(s)$ for all $s \in \mathcal{S}$, and define $C := A^{-1} G (G^\top A^{-1} G )^{-1}$.
Let the QR decomposition of $G^\top$ be
$$
G^\top = \left[\begin{matrix} U_1 & U_2\end{matrix} \right ]
\left[
\begin{matrix}
R \\ 0
\end{matrix}
\right],
$$
where $R$ is an upper-triangular matrix.
We define the functions
\begin{align*}
\phi_{0,1}(O) & = A^{-1} Z^\top \sqrt{W} \dot{V}_0(O) \text{ and }\\
\phi_{0,2}(S; v_0) & = - U_2
V^{-1}
\left[
z(S)^\top
\psi_{0,0}
- v_{0, S}
\right]
U_2^\top z(S),
\end{align*}
where $V = U_2^\top Z^\top W Z U_2 $.
Assuming all of the aforementioned conditions hold, then $\psi_{m,n}$ is a consistent and an asymptotically normal estimator of $\psi_{0,0}$.
\begin{theorem}\label{thm:shapley_vim}
    If the collection of conditions implied by (A1)--(A4) and (B1)--(B2) hold for every subset in $\mathcal{S}$ and condition (C1) holds, then $\psi_{m,n}$ has the asymptotic distribution
    \begin{align*}
    \begin{split}
    & \sqrt{n}(\psi_{m,n} - \psi_{0,0}) \rightarrow_d \ N\left(0, \Sigma_0\right),
    \end{split}
\end{align*}
    where $\Sigma_0 := \Cov_{P_0}(\phi_{0,1}(O))
    + \gamma^{-1} \Cov_{Q_0}(\phi_{0,2}(S; v_0))$.
\end{theorem}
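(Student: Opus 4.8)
The plan is to handle the three-term decomposition \eqref{eq:decomposition} one summand at a time and then recombine, exploiting throughout that both the population problem \eqref{eq:pop_plugin} and the sampled constrained problem \eqref{eq:plugin_shap} admit closed-form solutions that are \emph{linear} in the predictiveness responses. For the first term, the full-subset estimator is unconstrained and solves a weighted least squares, so $\psi_{0,n}=M v_n$ with $M=A^{-1}Z^\top W$, and likewise $\psi_{0,0}=M v_0$; the particular Shapley weights make this unconstrained solution satisfy additivity automatically, which is precisely why the explicit constraint is needed only once subsets are subsampled. Hence $\psi_{0,n}-\psi_{0,0}=M(v_n-v_0)$ is a fixed linear map applied to $v_n-v_0$. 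By Theorem~2 of \citet{williamson2020b}, which holds under (A1)--(A4) and (B1)--(B2), $v_n$ is asymptotically linear with influence function $\dot V_0$; composing $M$ with this expansion gives $\sqrt n(\psi_{0,n}-\psi_{0,0})=n^{-1/2}\sum_{i=1}^n\phi_{0,1}(O_i)+o_P(1)$, and the central limit theorem yields $\to_d N(0,\Cov_{P_0}(\phi_{0,1}(O)))$.

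For the second term, $v_0$ is deterministic and the only randomness is the i.i.d.\ draw of $m=\gamma_n n$ subsets $S_1,\dots,S_m$ from $Q_0$, so I would treat $\psi_{m,0}$ as a constrained $Z$-estimator indexed by the sampled subsets. Inverting the KKT system \eqref{eq:kkt_main} expresses $\psi_{m,0}$ through the sampled-subset analogues of $A=Z^\top W Z$ and of $Z^\top W v_0$; since $Z_m^\top W_m Z_m\to_{a.s.}A$ by the law of large numbers over subsets, the fluctuations of $\psi_{m,0}$ are driven by a subset-level central limit theorem. The constraint $G\psi=c$ is accommodated through the QR factorization $G^\top=[U_1\ U_2][R;0]^\top$: the columns of $U_2$ span the null space of $G$, so the only free variation lives in $\mathrm{range}(U_2)$, which is exactly why the influence function $\phi_{0,2}$ is the $U_2$-projected residual $z(S)^\top\psi_{0,0}-v_{0,S}$. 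This gives $\sqrt m(\psi_{m,0}-\psi_{0,0})\to_d N(0,\Cov_{Q_0}(\phi_{0,2}(S;v_0)))$, and rescaling by $\sqrt{n/m}\to\gamma^{-1/2}$ under (C1) delivers the variance contribution $\gamma^{-1}\Cov_{Q_0}(\phi_{0,2}(S;v_0))$.

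The remainder $r_{m,n}=(\psi_{m,n}-\psi_{m,0})-(\psi_{0,n}-\psi_{0,0})$ is a difference-in-differences that I would control via the same linearity. Writing the constrained and unconstrained least-squares solution maps as $L_m$ and $L_0$, one has $\psi_{m,n}-\psi_{m,0}=L_m(v_n-v_0)$ and $\psi_{0,n}-\psi_{0,0}=L_0(v_n-v_0)$, so $r_{m,n}=(L_m-L_0)(v_n-v_0)$. Because $Q_m\to Q_0$ forces $L_m\to L_0$ in probability while $v_n-v_0=O_P(n^{-1/2})$, it follows that $\sqrt n\,r_{m,n}=(L_m-L_0)\sqrt n(v_n-v_0)=o_P(1)O_P(1)=o_P(1)$. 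To recombine, I note that the training-sample noise carrying the first term depends only on $O_1,\dots,O_n$ whereas the subset-sampling noise carrying the second term depends only on the independently drawn $S_1,\dots,S_m$; the two leading terms are therefore asymptotically independent, their covariances add, and Slutsky's theorem absorbs the remainder to give the stated $\Sigma_0=\Cov_{P_0}(\phi_{0,1}(O))+\gamma^{-1}\Cov_{Q_0}(\phi_{0,2}(S;v_0))$.

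I expect the crux to be the second term: rigorously establishing the asymptotic linearity of the \emph{constrained} estimator $\psi_{m,0}$ over discrete subset sampling and correctly extracting the $U_2$-projection form of $\phi_{0,2}$ from the Lagrange-multiplier inversion. A secondary subtlety is justifying the exact asymptotic independence of the two noise sources and checking that the $\gamma_n\to\gamma$ rescaling in (C1) meshes cleanly with the subset-level limit; verifying the (A2)-type uniform differentiability so that the relevant empirical-process terms concentrate is the routine but necessary bookkeeping.
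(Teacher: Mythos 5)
Your high-level route is the same as the paper's: the decomposition \eqref{eq:decomposition}, a QR reparameterization of the constraint so that only the $U_2$-coordinates of the subset-sampling estimator are free (the paper then applies Theorem 5.23 of \citet{vandervaart2000} to get the influence function $\phi_{0,2}$), a product bound $\sqrt{n}\,r_{m,n} = o_P(1)\,O_P(1)$ for the remainder, and independence of the two noise sources plus Slutsky to add the covariances. The genuine gap is in your first term. You assert that the full-subset quantities are \emph{unconstrained} weighted least-squares solutions, $\psi_{0,n} = A^{-1}Z^\top W v_n$ and $\psi_{0,0} = A^{-1}Z^\top W v_0$, because ``the particular Shapley weights make this unconstrained solution satisfy additivity automatically.'' That is false. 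Take $p = 3$, where every diagonal entry of the paper's $W$ equals one, and let $v$ be the one-hot vector with $v_{\{1\}} = 1$ and all other coordinates zero. The design is a balanced $2^3$ factorial, so the unconstrained least-squares solution has slope $1/4$ for feature $1$, slopes $-1/4$ for features $2$ and $3$, and intercept $1/4$; but the Shapley value \eqref{eq:class_shap} is $\frac{1}{3}\binom{2}{0}^{-1}(v_{\{1\}} - v_\emptyset) = 1/3$ for feature $1$, the intercept should be $v_\emptyset = 0$, and additivity \eqref{eq:additivity} requires the slopes to sum to $v_N - v_\emptyset = 0$, not $-1/4$. So the unconstrained solution violates the constraints \emph{and} has the wrong feature components; the equivalence between Shapley values and weighted least squares holds only for the \emph{constrained} problem, which is exactly what Lemmas~\ref{lem:unit_class_shap} and \ref{lem:unit_wls} in the Supplement establish (and is why the constrained solution, unlike the unconstrained one, is insensitive to the finite weights placed on $\emptyset$ and $N$).

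This error propagates: in your remainder step you take $L_0$ to be the unconstrained map and $L_m$ the constrained one, so $L_m \not\to_p L_0$ and $(L_m - L_0)(v_n - v_0)$ is $O_P(n^{-1/2})$, not $o_P(n^{-1/2})$. The repair is what the paper does: use the constrained solution map throughout. Since the constraint vector $c_n = (v_{n,\emptyset}, v_{n,N})^\top$ is itself linear in $v_n$, the constrained solution remains a fixed linear function of $v_n$ --- the paper writes it as the classical Shapley matrix $B(p)$ in the first term, and as the KKT-inverse maps $C_2(Q_m)$ and $C_2(Q_0)$ (with $C_2(Q_0) = B(p)$) in the remainder term. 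With that substitution, your delta-method composition with Theorem 2 of \citet{williamson2020b} and your $o_P(1)\,O_P(1)$ bound go through verbatim. (In partial fairness to you, the main text's shorthand $\phi_{0,1}(O) = A^{-1}Z^\top \sqrt{W}\,\dot{V}_0(O)$, which your $M$ resembles, is itself inconsistent with the supplement's $\phi_{0,1} = B(p)\dot{V}_0$; the proof requires the latter.)
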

To construct Wald-based confidence intervals (CIs) for $\psi_{0,0}$, we estimate the asymptotic covariance $\Sigma_0$ by plugging in consistent estimators of each component.
That is, we use consistent estimators $A_m$, $Z_m$, and $W_m$ of $A$, $Z$, and $W$, respectively.
Note that the estimators and CIs may be constructed using only the sampled subsets.
If $\psi_{0,0,j} = 0$ for any $j$, then the contribution from sampling observations to the asymptotic covariance term corresponding to index $j$ will be zero, leading to some additional complications.
We discuss this case further in the next section.

Conditions (A1)--(A4) are required to control the contribution from estimating $f_{0,s}$ for each $s \in \mathcal{S}$.
\citet{williamson2020b} show that these conditions are satisfied for $R^2$, deviance, accuracy, and AUC.
Conditions (B1)--(B2) place restrictions on the class of estimators of $f_{0,s}$ that we may consider.
While condition (B1) holds for many estimators (e.g., generalized additive models \citep{hastie1990}), we show in Section \ref{sec:sims} that this condition may only need to be approximately satisfied.
Condition (B2) is implied by a form of consistency of $f_{n,s}$.

Finally, condition (C1) is necessary to control the contribution from having had to estimate $Q_0$.
Because $\psi_{0,n}$ is an asymptotically efficient estimator of $\psi_{0,0}$, this condition implies that sampling $m = \Theta(n)$ subsets is asymptotically optimal, up to a constant factor proportional to $\gamma^{-1}$.
Intuitively, this is because there is an irremovable error contribution from having sampled $n$ training observations.
As such, we simply need to sample enough subsets for the second error term in \eqref{eq:decomposition} to be on the same order as the first term.
Moreover, because the distribution $Q_0$ places the heaviest weight on subset sizes at the extremes (closest in size to the empty set and full set), we do not need to estimate a large number of unique prediction functions in practice.
To our knowledge, this is the first result that delineates the number of feature subsets to sample for constructing an asymptotically normal estimator of Shapley values.

\subsection{Testing the null SPVIM hypothesis}\label{sec:testing}

We now use Theorem~\ref{thm:shapley_vim} to construct a test for the null hypothesis that a variable is not important, i.e., $\psi_{0,0,j} = 0$ for some $j$.
When a variable $X_j$ has null importance, the true value $\psi_{0,0,j}$ is at the boundary of the parameter space, and the contribution to the asymptotic variance from sampling observations in Theorem~\ref{thm:shapley_vim} is zero.
This may cause difficulties in hypothesis testing: as the number of sampled subsets grows, the contribution to the asymptotic variance from sampling subsets tends to zero.
Thus, in the limit, a hypothesis test based on the estimator of this asymptotic variance proposed in the previous section will fail to appropriately control the type I error.

Instead, we rely on sample-splitting to construct a valid test of the $\delta$-null hypothesis of the $j$th SPVIM value, i.e., $H_{0,j}: \psi_{0,0,j} \in [0, \delta]$.
In our approach, we make use of the fact that $\psi_{0,0,\emptyset}$ may be nonzero for some predictiveness measures (e.g., AUC).
Based on one portion of the data, construct estimator $\psi_{m,n,j,+} := \psi_{m,n,j} + \psi_{m,n,\emptyset}$ of $\psi_{0,0,j} + \psi_{0,0,\emptyset}$ and obtain an estimator $\sigma_{n,j}^2$ of the variance $\sigma_{0,j}^2 := (\Sigma_0)_{jj}$.
Based on the remaining data, obtain an estimator $\psi_{m,n,\emptyset,1}$ of $\psi_{0,0,\emptyset}$ with corresponding variance estimator $\sigma_{n,\emptyset}^2$.
Then, we calculate a test statistic $T_n := \frac{(\psi_{m,n,j,+} - \psi_{m,n,\emptyset,1}) - \delta}{\sqrt{n_1^{-1}\sigma_{n,j}^2 + 2*n_2^{-1}\sigma_{n,\emptyset}^2}}$ and its corresponding $p$-value $p_n := 1 - \Phi(T_n)$, where $n_1$ and $n_2$ denote the respective sample sizes of the split dataset and $\Phi$ denotes the standard normal cumulative distribution function.
We reject $H_0$ if and only if $p_n < \alpha$ for some pre-specified level $\alpha$.
Under conditions (A1)--(A4), (B1)--(B2), and (C1), for any $\alpha \in (0,1)$, the proposed test is consistent and has type I error equal to $\alpha$.

\section{Local and group variable importance}
Until now, we have focused on a global measure of importance by integrating over the entire distribution $P_0$.
For certain settings, we may be interested instead in a local version of variable importance.
A simple extension of \eqref{eq:class_shap} or \eqref{eq:wls} allows us to define a local version of variable importance: for a subpopulation $A \subseteq \mathcal{X}$,
\begin{align*}
    \psi_{0,0,j}(A) := \ \frac{1}{p}\sum_{s \in \mathcal{S}}\binom{p-1}{\lvert s \rvert}^{-1}&\{V(f_{0,s\cup j}, P_{0 \mid X \in A}) \\
    &- V(f_{0,s}, P_{0 \mid X \in A})\}, \notag
\end{align*}
where we have simply plugged the conditional distribution $P_{0 \mid X \in A}$ into \eqref{eq:class_shap}.
Taken to the extreme, where the subpopulation $A$ consists only of a single observation, this definition of local feature importance is equivalent to the SHAP values considered by \citet{lundberg2017}, though here we use an arbitrary measure of predictiveness in place of the conditional expectation.
Unfortunately, valid statistical inference on this individual-observation-level importance appears difficult, if not impossible.

In addition, if there is some scientifically meaningful partition of the features, we can extend SPVIM to these feature subgroups.
For example, one may group together all measurements from the same medical device.
Let the partition of features into groups be denoted $\mathcal{P} := \{s_1, \dots, s_k\}$ where $s_i \in \mathcal{S}$ and $\bigcup_{i = 1}^k s_i = N$, and $s_i \bigcap s_j = \emptyset$ for every $(i,j)$ pair.
Then the Shapley-based population variable \textit{group} importance measure may be determined as in \eqref{eq:class_shap}, where the sum is taken over all subsets in $\mathcal{P}$.

\section{Simulation study}
\label{sec:sims}

In this section, we present simulation results validating our statistical inference procedure for SPVIM in finite samples.
We consider 200 covariates $X \sim N_{200}(0, \Sigma)$. The variance-covariance matrix $\Sigma$ has diagonal equal to 1 and several correlated features: $Cov(X_1, X_{11}) = 0.7$; $Cov(X_3, X_{12}) = Cov(X_3, X_{13}) = 0.3$; and $Cov(X_5, X_{14}) = 0.05$. The covariance of the remaining feature pairs is zero. Based on these covariates, we observe a continuous outcome $Y \mid X = x \sim N(f(x), 1)$, where
\begin{align*}
    f(x) =& \ \sum_{j\in \{1,3,5\}} f_j(x_j), \\
    f_1(x) =& \ \text{sign}(x), \\
    f_3(x) =& \ (-6)I(x \leq -4) + (-4)I(-4 < x \leq -2) \\
    & + (-2)I(0 \leq x < -2) + 2I(2 < x \leq 4)\\
    & + 4I(x > 4), \text{ and } \\
    f_5(x) =& \ (-1)I(x \leq -4 \text{ or } -2 < x \leq 0 \text{ or } 2 < x \leq 4) \\
    &+ I(-4 < x \leq -2 \text{ or } 0 < x \leq 2 \text{ or } x > 4).
\end{align*}
In this data-generating mechanism, the vector $(X_1, X_3, X_5)$ is directly relevant to predicting the outcome, while the vector $(X_{11}, \ldots, X_{14})$ is only related to the outcome through correlation with $(X_1, X_3, X_5)$; the remaining 193 features are pure noise.
We generated 1,000 random datasets of size $n \in \{500, 1000, 2000, 3000, 4000\}$.
The true SPVIM values for predictiveness defined in terms of $R^2$ are approximately $(0.19, 0.29, 0.23, 0.04, 0.01, 0.01, 0)$ for the non-noise features, respectively, and zero for the remaining features.

To obtain each $f_{n,s}$ we fit boosted trees \citep{friedman2001} using the Python package \texttt{xgboost} \citep{chen2016} with maximum tree depth equal to one, learning rate equal to $10^{-2}$, and $\ell_2$-regularization parameter equal to zero.
The number of trees varied among $\{50, 100, 250, 500, 1000, \ldots, 3000\}$ and the $\ell_1$-regularization parameter varied among $\{10^{-3}, 10^{-2}, 0.1, 1, 5, 10\}$; the combination of these parameters was tuned using five-fold cross-validation to minimize the mean squared error (MSE).

We computed the relevant SPVIM estimator using Algorithm~\ref{alg:spvi_cls}, where we sampled $m = 2n$ subsets and estimated predictiveness using five-fold cross-fitting.
For comparison, we computed the mean absolute SHAP value \citep{lundberg2017}, where the average was taken over all observations.
This allows us to directly evaluate the accuracy of algorithmic VIMs for estimating the population VIMs.
We then computed the empirical MSE scaled by $n$, the empirical coverage of nominal 95\% CIs, and the empirical power of our proposed hypothesis test.
Finally, we compare the accuracy of our SPVIM estimates and the mean SHAP values in terms of their correlation with the true SPVIM values.
All analyses were performed on a computer cluster with 32-core CPU nodes with 64 GB RAM.

We display the results of this experiment in Figure~\ref{fig:shapley_sim}.
We see that as $n$ increases, the scaled empirical MSE of our estimator decreases to a fixed level --- namely, the scaled empirical variance --- for each feature.
This matches our expectations from Section~\ref{sec:inf}: the scaled empirical bias of our proposed estimator should tend to zero with increasing sample size, while the scaled empirical variance tends to the asymptotic variance.
We note here that while boosted trees are a popular estimation procedure, they do not necessarily satisfy condition (B1) \citep[see, e.g.,][]{zhang2005}.
Thus, the convergence observed here provides some empirical evidence that condition (B1) may only need to hold approximately in practice.
We also find that the coverage of nominal 95\% confidence intervals increases to the nominal level as the sample size increases.
Our proposed hypothesis test controls the type I error rate and is consistent: the empirical type I error rate is at the nominal level for null feature $X_6$, while the empirical power is near one for each of the directly important features.
Power tends to be small for the indirectly important features $(X_{11}, \ldots, X_{14})$, especially at small sample sizes; this reflects the fact that the importance of these features is closer to the null hypothesis than the importance of the directly relevant features.
Finally, we see that SPVIM estimates are more correlated with the true population importance than SHAP values.
We provide the estimated SPVIM and mean absolute SHAP values in the Supplement.

\begin{figure*}
   \hspace{-0.55cm}\includegraphics[width = 1.1\textwidth]{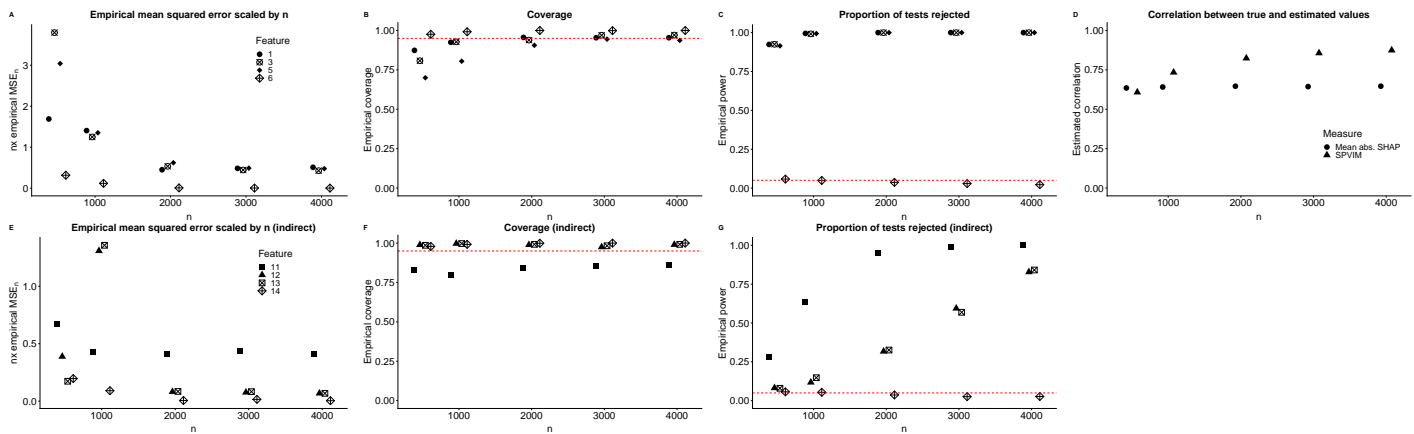}
    \caption{
    Performance of our statistical inference procedure for estimating the Shapley-based population variable importance (SPVIM) with respect to $R^2$ using $n$ training observations and $2n$ sampled subsets.
    (A, E) Empirical MSE for the proposed plug-in estimator scaled by $n$ for $j \in \{1,3,5,6\}$ and $j \in \{11, 12, 13, 14\}$, respectively;
    (B, F) Empirical coverage of nominal 95\% confidence intervals;
    (C, G) Empirical power of the hypothesis testing procedure for null hypothesis that the $j$th variable has null importance;
    (D) Kendall's tau between the true and estimated SPVIM values using our approach versus the mean absolute SHAP value.
    }
    \label{fig:shapley_sim}
\end{figure*}

\section{Predicting mortality of patients in the intensive care unit}
\label{sec:data}

We now analyze data on patients' stays in the ICU from the Multiparameter Intelligent Monitoring in Intensive Care II (MIMIC-II) database \citep{silva2012predicting}.
We consider 4000 records on several features: five general descriptors collected upon admission to the ICU, and 15 features --- including the Glasgow Coma Scale (GCS), blood urea nitrogen (BUN), and heart rate --- measured over the course of the first 48 hours after admission to the ICU.
The outcome of interest is in-hospital mortality.
Rather than use the entire time series, we simplify the analysis by first computing the minimum, average, and maximum value of each of the time-series features used in the simplified acute physiology (SAPS) I or II scores.
The SAPS scores are established measures for estimating the mortality risk of ICU patients.
We then remove any features that are measured in fewer than 70\% of the patients.
When combined with the general descriptor variables, a total of 37 extracted features remain.
We provide a full list of these extracted features in the Supplement.

We estimate the SPVIM for each variable using AUC to measure predictiveness.
For comparison, we also provide the mean absolute SHAP value obtained from Tree SHAP \citep{lundberg2020} and Kernel SHAP \citep{lundberg2017}; and the proportion of times a feature was selected across test instances using LIME \citep{ribeiro2016}.
We discuss conditions under which the mean absolute SHAP value is a suitable proxy for the SPVIM in Section~\ref{sec:shap_v_spvim} in the Supplement.

We obtained estimates of each $f_{0,s}$ using two separate procedures.
In the first analysis, we maximized the empirical log likelihood using boosted trees with maximum depth equal to four, learning rate equal to $10^{-3}$, and a number of estimators in $\{2000, 4000, \dots, 12000\}$ selected using five-fold cross-validation.
In the second analysis, we maximized the empirical log likelihood by fitting ensembles of five dense ReLU neural networks (NNs) with architectures chosen from $\{(37,25,25,20,10,1), (37,25,20,1), (37,25,20,20,1)\}$ using 5-fold cross-validation.
The NNs were trained using Adam \citep{kingma2014adam} with a maximum of 2000 iterations and with $\ell_2$ regularization parameter equal to 0.1.
We again used 5-fold cross-fitting to estimate the predictiveness measures for the sampled subsets.
Using our procedure, we fit models for only 119 unique subsets and computed SPVIM estimates in two hours for each analysis.
LIME had similar computation time (1.7 hours) in the case of NNs, but longer computation time (4 hours) in the case of trees.
The computation time of both our procedure and LIME falls between the highly specialized Tree SHAP algorithm, which completed in a few minutes, and the general-purpose Kernel SHAP, which took approximately 20 hours.

\begin{figure*}
    \centering
    \includegraphics[width = 1\textwidth]{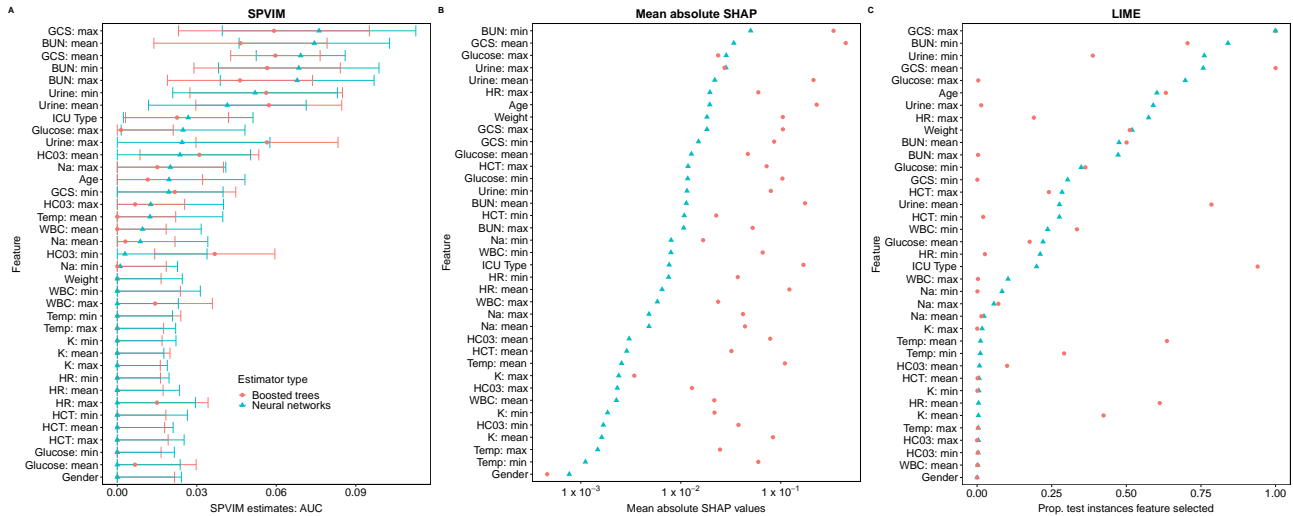}
    \caption{
	    We estimated importance of features for predicting in-hospital death in the ICU using our statistical inference procedure for SPVIM with respect to AUC (A), the mean absolute SHAP value (B), and LIME (C).
	    Red circles and green triangles denote estimates from fitting boosted trees and neural networks, respectively.
        The features are ordered from top to bottom by their point estimate from the neural networks procedure.
	    95\% confidence intervals only appear in (A) since there is no statistical inference procedure for mean absolute SHAP values or LIME.}
    \label{fig:icu_results}
\end{figure*}

In Figure~\ref{fig:icu_results}, we display the estimates from each VIM and both estimation procedures.
We first focus on the SPVIM estimates provided in Panel A.
The GCS is estimated to be the most important feature using both trees and NNs, though different summaries of the GCS are most important across the two procedures (mean for trees and max for NNs).
This result matches prior knowledge: GCS is used to assess the level of consciousness of patients and is the highest scoring item in the SAPS scores.
We find that the confidence intervals for SPVIM are quite wide, which is important information for placing the results in context.

Next, we compare the agreement between rankings calculated based on the fitted boosted trees and NNs for the SPVIM estimates, mean absolute SHAP values (Figure~\ref{fig:icu_results} panel B), and LIME (panel C).
There is considerably more agreement between the two procedures for the SPVIM estimates than for the SHAP value estimates and LIME proportions.
The estimated Kendall's tau between procedures is 0.71 for our SPVIM estimator vs 0.37 for SHAP and 0.39 for LIME.
Given the large discrepancies between the algorithmic VIMs, we conclude that they are poor proxies for our population VIM.
Instead, one should use a procedure specifically designed to estimate SPVIM.

Finally, we find that the feature rankings within trees or NNs from our SPVIM estimator, SHAP, and LIME are substantively different.
One noticeable difference is that SHAP and LIME values for several summary statistics derived from the same measurement (e.g., min, mean, and max GCS) differ widely; this should not occur, since these summary statistics are highly correlated.
On the other hand, SPVIM estimates for summary statistics derived from the same measurement tend to be more similar.

\section{Discussion}

We have proposed a computationally tractable statistical inference procedure for the Shapley population variable importance measure (SPVIM).
Methods for estimating SPVIM are complementary to those for estimating algorithmic variable importance.
The former helps us understand the underlying data-generating mechanism and can help guide future experiments; the latter helps us interpret a particular fitted model.
Here, we define SPVIM with respect to an arbitrary measure of predictiveness, allowing the data analyst to select the most appropriate measure for the task at hand.
Since the SPVIM is also defined relative to the population, the target of inference is not affected by the choice of prediction algorithm.
We have derived the asymptotic distribution of an SPVIM estimator based on randomly sampled feature subsets, and have used this distribution to construct asymptotically normal point estimates, valid confidence intervals, and hypothesis tests with the correct type I error control.
Notably, we determined a minimum number of feature subsets to sample: we show that our estimator only needs to fit prediction models for $m = \Theta(n)$ sampled subsets for its error to be on the same order as an estimator that fits prediction models for all possible subsets.

In this manuscript, we have focused on quantifying the importance of a variable averaging across the entire population.
Local importance measures can be obtained by restricting to smaller subpopulations.
However, as the subpopulations decrease in size, the uncertainty of our estimates increases.
Our asymptotic results do not apply to the most extreme case, the variable importance at the level of a single observation.
Nevertheless, this value may be of interest in some tasks.
Further work is necessary to define relevant importance measures at the single-observation-level and derive procedures with the desired performance.

Finally, we caution against interpreting SPVIM estimates in a causal manner.
SPVIM reflects importance in the oracle prediction model rather than importance in the oracle causal model.
In many scientific applications, the importance in the causal model is of ultimate interest.
To get causal importance, one may need to employ techniques from causal inference.
Recent developments relating prediction models and causal models may also be of use in these cases \citep{Arjovsky2019-cw}.

\section{Acknowledgments}

The authors wish to thank Jessica Perry, Noah Simon, the anonymous reviewers, and the meta-reviewer for insightful comments that improved this manuscript. BDW was supported by NIH award F31 AI140836. The opinions expressed in this manuscript are those of the authors and do not necessarily represent the official views of the NIAID or the NIH.

\clearpage
\newpage

\section*{Supplementary Material}\label{sec:supp}

\section{Proof of Theorem 1}\label{sec:proofs}

We consider the decomposition
\begin{align}\label{eq:decomposition2}
   \psi_{m,n} - \psi_{0,0} =& \ (\psi_{0,n} - \psi_{0,0}) + (\psi_{m,0} - \psi_{0,0}) + r_{m,n},
\end{align}
where
\begin{align}
\begin{split}
\psi_{m,0}
= & \argmin_{\psi \in \mathbb{R}^{p+1}} E_{Q_m} \left[
(Z(S) \psi - v_{0,S})^2
\right] \\
& \text{subject to } G\psi = c_0
\end{split}
\label{eq:plugin_shap_recall}
\end{align}
and $r_{m,n} := (\psi_{m,n} - \psi_{m,0}) - (\psi_{0,n} - \psi_{0,0})$.

We first control the first term in \eqref{eq:decomposition2}.
Since Shapley values are defined as a linear combination of the predictiveness vector, let the matrix $B(p)$ encode these weights.
Note that this matrix only depends on $p$.
The first row of $B(p)$, denoted $[B(p)]_1$, is given by $[B(p)]_1 = z(\emptyset)$.
The matrix entry in row $j = 2, \dots, p+1$ and column $i = 1, \dots, 2^p$ is
$$
[B(p)]_{ji} := \frac{1}{p}(-1)^{I\{(j-1) \notin s_i\}}\binom{p-1}{\lvert s_i \rvert - I\{(j-1) \in s_i\}}^{-1},
$$
where column $i$ corresponds to subset $s_{(i)}$.
Then
\begin{align*}
    \psi_{0,0} &= B(p) v_0 \text{ and }\\
    \psi_{0,n} &:= B(p) v_n.
\end{align*}
Under the collection of conditions implied by (A1)--(A4) and (B1)--(B2) for each subset $s \in \mathcal{S}$, a straightforward application of the functional delta method and Theorem 2 of \citet{williamson2020b} yields that $\psi_{0,n}$ is an asymptotically linear estimator of $\psi_{0,0}$ with influence function given by
\begin{align}\label{eq:equation_1}
    \phi_{0,1}: o \mapsto B(p)\dot{V}_0(o),
\end{align}
where $\dot{V}_0$ is the influence function of $v_n$ and is defined in the main manuscript.

We now control the second term in \eqref{eq:decomposition2}.
We use the equivalent weighted least squares formulation of the Shapley values,
\begin{align}
\psi_{0,0} &= \argmin_{\psi: G\psi = c_0 } E_{Q_0} \left(
Z(S) \psi - v_{0,S}
\right)^2
\label{eq:constrained_wls_1} \text{ and }
\\
\psi_{m,0} &= \argmin_{\psi: G\psi = c_0 } E_{Q_m} \left(
Z(S) \psi - v_{0,S}
\right)^2.
\label{eq:constrained_wls_2}
\end{align}
We write the the QR decomposition of $G^\top$ as
\begin{align*}
G^\top = U
\left[
\begin{matrix}
R\\
0
\end{matrix}
\right]
=
\left[
\begin{matrix}
U_1 & U_2
\end{matrix}
\right]
\left[
\begin{matrix}
R\\
0
\end{matrix}
\right],
\end{align*}
where $U$ is an orthonormal matrix and $R$ is an upper triangular matrix.
$U_1$ is a 2-column orthogonal matrix corresponding to the column space of $G^\top$ and $U_2$ is a $(p - 1)$-column orthogonal matrix corresponding to its null space.
As such, we can reparameterize the constrained least squares problems in \eqref{eq:constrained_wls_1} and \eqref{eq:constrained_wls_2} using the vector $\theta \in \mathbb{R}^{p + 1}$, where $\psi = U\theta$.
The constraint $G\psi = c_0$ implies that
\begin{align}
\left[
\begin{matrix}
R^\top & 0
\end{matrix}
\right]
 \theta
 =
 R^\top \theta_1
 = c_0,
\label{eq:theta1_solve}
\end{align}
where $\theta_1$ is the first 2 elements of $\theta$.
Thus $\theta_1$ is fixed by the constraint, while $\theta_2$ is not constrained.
This implies that the solutions to \eqref{eq:constrained_wls_1} and \eqref{eq:constrained_wls_2} correspond to $\theta$ with $\theta_1$ as the solution to \eqref{eq:theta1_solve} and $\theta_2$ as the solution to the unconstrained least squares problems
\begin{align*}
\theta_{2,0} &= \argmin_{\theta_2 \in \mathbb{R}^{p - 1}} E_{Q_0} \left(
Z(S) (U_1 \theta_1 + U_2 \theta_2) - v_{0,S}
\right)^2 \text{ and }
\\
\theta_{2,m} &= \argmin_{\theta_2 \in \mathbb{R}^{p - 1}} E_{Q_m} \left(
Z(S) (U_1 \theta_1 + U_2 \theta_2) - v_{0,S}
\right)^2.
\end{align*}

A straightforward application of Theorem 5.23 in \citet{vandervaart2000} yields that $\theta_{2,m}$ is an asymptotically linear estimator of $\theta_{2,0}$, with
\begin{align*}
&\sqrt{m}(\theta_{2,m} - \theta_{2,0}) \notag \\
=& \ -\frac{1}{\sqrt{m}}\sum_{j=1}^m
\bigg[V^{-1}
\left\{
z(S_j)^\top
(U_1 \theta_{1} + U_2 \theta_{2,0})
- v_{0, S_j}
\right\} \notag \\
&\hspace{.8in}\times U_2^\top z(S_j)\bigg] + o_P(1),
\end{align*}
where $V = U_2^\top Z^\top W Z U_2 $.
Thus, $\psi_{m,0}$ is an asymptotically linear estimator of $\psi_{0,0}$, i.e.,
\begin{align}
\sqrt{m}(\psi_{m,0} - \psi_{0,0})
= \ \frac{1}{\sqrt{m}}\sum_{j=1}^m
\phi_{0,2}(S_j; v_0) + o_P(1)
\label{eq:equation_2}
\end{align}
where $\phi_{0,2}(S; v_0)$ is defined as
\begin{align*}
\phi_{0,2}: s \mapsto - U_2
V^{-1}
\left[
z(s)^\top
\psi_{0,0}
- v_{0, s}
\right]
U_2^\top z(s).
\end{align*}

Finally, we control the remainder term $r_{m,n}$.
By the KKT conditions in the main manuscript, we have that
\begin{align*}
\psi_{m,n} & = C_2(Q_m) v_n
\end{align*}
where $C_2(Q_m)$ is defined as
\begin{align*}
\left[
\begin{matrix}
I_{p + 1} & 0
\end{matrix}
\right]
\left[
\begin{matrix}
2 Z_m^\top W_m Z_m & G^\top\\
G & 0
\end{matrix}
\right]
^{-1}
\left[
\begin{matrix}
2 \sqrt{W_m}\\
e_{\emptyset}\\
e_{N} - e_{\emptyset}
\end{matrix}
\right]
\end{align*}
and $e_{s} \in \{0,1\}^{2^p}$ is a one-hot vector for the set $s$.
Likewise, define $C_2(Q_0)$ as
\begin{align*}
\left[
\begin{matrix}
I_{p + 1} & 0
\end{matrix}
\right]
\left[
\begin{matrix}
2 Z^\top W Z & G^\top\\
G & 0
\end{matrix}
\right]
^{-1}
\left[
\begin{matrix}
2 \sqrt{W}\\
e_{\emptyset}\\
e_{N} - e_{\emptyset}
\end{matrix}
\right].
\end{align*}
Then
\begin{align*}
r_{m,n} = & \ (\psi_{m,n} - \psi_{m,0}) - (\psi_{0,n} - \psi_{0,0})\\
= & \ \{C_2(Q_m) - C_2(Q_0)\}(v_n - v_0).
\end{align*}
Since the empirical distribution $Q_m$ converges weakly to $Q_0$, then $C_2(Q_m) \to_p C_2(Q_0)$.
Moreover, if (A1)--(A4) and (B1)--(B2) hold for each subset $s \in \mathcal{S}$, then $v_n - v_0 = O_p(n^{-1/2})$.
Thus
\begin{align}\label{eq:remainder}
r_{m,n} =& \ o_P(n^{-1/2}).
\end{align}

In view of \eqref{eq:equation_1}, \eqref{eq:equation_2}, and \eqref{eq:remainder}, we can write
\begin{align*}
& \sqrt{n}(\psi_{m,n} - \psi_{0,0}) \notag \\
=& \ \sqrt{n}(\psi_{0,n} - \psi_{0,0}) + \sqrt{n}(\psi_{m,0} - \psi_{0,0}) + \sqrt{n}r_{m,n} \notag \\
=& \ \frac{1}{\sqrt{n}}\sum_{i=1}^n \phi_{0,1}(O_i)
 + \frac{1}{\sqrt{n \gamma_n }} \sum_{i=1}^{n \gamma_n} \phi_{0,2}(S_i; v_0)
 + o_P(1).
\end{align*}
Because $O$ and $S$ are sampled independently and $\gamma_n \rightarrow_p \gamma$, then, by Slutsky's theorem, we have that
\begin{align*}
&\sqrt{n}(\psi_{m,n} - \psi_{0,0})
\to_d \\
&N\left[
0,
\Cov\{\phi_{0,1}(O)\}
+ \gamma^{-1} \Cov\{\phi_{0,2}(S; v_0)\}
\right]
\end{align*}
Finally, note that if $\gamma_n \to \infty$, then the second term in the asymptotic variance is zero.

\section{Additional technical details}

\subsection{Shapley values minimize a weighted least squares problem}

Recall the classical Shapley formula: for $j = 1, \ldots, p$,
\begin{align*}
    \psi_{0,j} = \frac{1}{p}\sum_{s \in N \setminus \{j\}} \binom{p-1}{\lvert s \rvert}^{-1}(v_{0, s\cup \{j\}} - v_{0,s}).
\end{align*}
Our goal is to show that the solution $x^*$ to the minimization problem
\begin{align*}
      &\minimize_{x \in \mathbb{R}^{p+1}} \frac{1}{2}\lVert \sqrt{W}(Zx - v_0)\rVert_2^2 \\
      &\text{subject to } \sum_{j=1}^p x_j = v_{0,N} - v_{0,\emptyset} \text{ and } x_0 = v_{0,\emptyset} \notag
\end{align*}
satisfies $x^*_j = \psi_{0,j}$ for $j = 1, \ldots, p$.

Since the classical Shapley values in the first display and the solution to the constrained, weighted least squares problem are both linear in $v_0$, if we can prove that the two values are equivalent for all one-hot vectors $v_{(k)}$ for $k = 1, \dots, 2^p$, then we will have proved that the two values are equal.
Our first result provides the form of the classical Shapley values for a one-hot vector. As in the main manuscript, $s_{(k)}$ refers to the $k$th ordered subset of $N = \{1, \dots, p\}$.
\begin{lemma}\label{lem:unit_class_shap}
    For $j = 1, \ldots, p$, the classical Shapley value corresponding to one-hot vector $v_{(k)}$ is given by
    \begin{align*}
        \psi_j(v_{(k)}) =& \frac{1}{p}1\{k \neq 1, k \neq 2^p\}\\
        &\times \bigg[\binom{p-1}{\lvert s_{(k)} \rvert - 1}^{-1}1\{j \in s_{(k)}\}\\
        &\hspace{.2in}- \binom{p-1}{\lvert s_{(k)} \rvert}^{-1}1\{j \notin s_{(k)}\}\bigg] \\
        &+ \frac{1}{p}(v_{(k),N} - v_{(k),\emptyset}).
    \end{align*}
\end{lemma}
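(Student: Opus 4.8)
The plan is to verify the formula by direct substitution of the one-hot vector $v_{(k)}$ into the classical Shapley formula and tracking which summands survive. Writing $v_{(k),s} = 1\{s = s_{(k)}\}$, the $j$th classical Shapley value becomes
\[
\psi_j(v_{(k)}) = \frac{1}{p}\sum_{s \subseteq N \setminus \{j\}} \binom{p-1}{\lvert s \rvert}^{-1}\left(1\{s \cup \{j\} = s_{(k)}\} - 1\{s = s_{(k)}\}\right),
\]
where the sum runs over subsets $s$ that do not contain $j$. Since each indicator is nonzero for at most one such $s$, at most two summands contribute, and the computation reduces to identifying them.

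First I would pin down those summands. The term $1\{s \cup \{j\} = s_{(k)}\}$ can be nonzero (with $j \notin s$) only when $j \in s_{(k)}$, in which case $s = s_{(k)} \setminus \{j\}$ and $\lvert s \rvert = \lvert s_{(k)} \rvert - 1$; the term $1\{s = s_{(k)}\}$ can be nonzero (with $j \notin s$) only when $j \notin s_{(k)}$, in which case $\lvert s \rvert = \lvert s_{(k)} \rvert$. Reading off the corresponding binomial weights yields the unified expression
\[
\psi_j(v_{(k)}) = \frac{1}{p}\left[\binom{p-1}{\lvert s_{(k)} \rvert - 1}^{-1}1\{j \in s_{(k)}\} - \binom{p-1}{\lvert s_{(k)} \rvert}^{-1}1\{j \notin s_{(k)}\}\right],
\]
which I claim holds for every $k$. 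I would note in passing that the binomial coefficients are always well-defined, since $j \in s_{(k)}$ forces $\lvert s_{(k)} \rvert \geq 1$ and $j \notin s_{(k)}$ forces $s_{(k)} \subseteq N \setminus \{j\}$, hence $\lvert s_{(k)} \rvert \leq p-1$.

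The remaining step is to reconcile this unified expression with the stated form, which deliberately isolates the two extreme subsets. For interior $k$, i.e.\ $s_{(k)} \notin \{\emptyset, N\}$, the additive term $\tfrac{1}{p}(v_{(k),N} - v_{(k),\emptyset})$ vanishes and the indicator $1\{k \neq 1, k \neq 2^p\}$ equals one, so the two presentations coincide termwise. For $k = 2^p$ we have $j \in s_{(k)} = N$ for every $j$ and $\binom{p-1}{p-1}^{-1} = 1$, so the unified expression equals $1/p$ for all $j$; this matches the stated form, whose bracket is switched off and whose additive term contributes $\tfrac{1}{p}(1 - 0)$. The case $k = 1$ is symmetric, giving $-1/p$. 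Recording these three checks establishes equivalence of the two forms.

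I do not expect a genuine obstacle here; the argument is a finite computation. The only place demanding care is the bookkeeping at the boundary subsets $\emptyset$ and $N$: one must confirm that the case analysis remains valid there, so that no binomial coefficient is read outside its range, and that the resulting contribution, which is then independent of $j$, is exactly what the additive $\tfrac{1}{p}(v_{(k),N} - v_{(k),\emptyset})$ term reproduces. This is precisely what motivates the grouping chosen in the statement, and verifying it is the main point of the proof.
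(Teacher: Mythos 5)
Your proof is correct and takes essentially the same approach as the paper's: substitute the one-hot vector $v_{(k)}$ into the classical Shapley formula, observe that at most one summand survives (split on $j \in s_{(k)}$ versus $j \notin s_{(k)}$), and then check the boundary cases $k=1$ and $k=2^p$, where the bracket is switched off and the additive term $\tfrac{1}{p}(v_{(k),N} - v_{(k),\emptyset})$ supplies the values $\mp 1/p$. The only cosmetic difference is that you package the computation as a single unified expression valid for all $k$ before reconciling it with the stated form, while the paper treats interior and extreme $k$ directly; both are the same finite verification.
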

\begin{proof}
    For $j = 1, \dots, p$, the classical Shapley formula states that
    \begin{align*}
        \psi_j(v_{(k)}) = \frac{1}{p}\sum_{S \subseteq N \setminus \{j\}}\binom{p-1}{\lvert S \rvert}^{-1}(v_{(k), s \cup \{j\}} - v_{(k),s}).
    \end{align*}
    Since $s_{(k)}$ corresponds to $v_{(k)}$, we have that
    \begin{align*}
        \psi_j(v_{(k)}) = \begin{cases}
                            \frac{1}{p}\binom{p-1}{\lvert s_{(k)} \rvert - 1}^{-1}(1 - 0) & \text{ if } j \in s_{(k)} \\
                            \frac{1}{p}\binom{p-1}{\lvert s_{(k)} \rvert}^{-1}(0 - 1) & \text{ if } j \notin s_{(k)}
                          \end{cases}.
    \end{align*}
    For $k = 2, \dots, 2^p - 1$, we have that $v_{(k),N} - v_{(k),\emptyset} = 0$. Thus, the claim is proved for these values of $k$.
    Note that $\psi_j(v_{(1)}) = -1/p$, while for $k = 2^p$, $\psi_j(v_{(2^p)}) = 1/p$ by the definition above.
    Thus, the claim is proved for all $k$.
\end{proof}
Our next result provides the solution to the constrained, weighted least squares problem
\begin{align}\label{eq:unit_wls}
      &\minimize_{x \in \mathbb{R}^{p+1}} \lVert \sqrt{W}(Zx - v_{(k)})\rVert_2^2 \\
      &\text{subject to } \sum_{j=1}^p x_j = v_{(k),N} - v_{(k),\emptyset} \text{ and } x_0 = v_{(k),\emptyset}. \notag
\end{align}
\begin{lemma}\label{lem:unit_wls}
    For $j = 1, \dots, p$, the solution to \eqref{eq:unit_wls} is given by
    \begin{align*}
        x^*_j(v_{(k)}) =& \frac{1}{p}1\{k \neq 1, k \neq 2^p\} \\
        &\times \bigg[\binom{p-1}{\lvert s_{(k)} \rvert - 1}^{-1}1\{j \in s_{(k)}\}\\
        &\hspace{.2in}- \binom{p-1}{\lvert s_{(k)} \rvert}^{-1}1\{j \notin s_{(k)}\}\bigg] \\
        &+ \frac{1}{p}(v_{(k),N} - v_{(k),\emptyset}).
    \end{align*}
\end{lemma}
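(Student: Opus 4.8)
The plan is to exhibit the candidate vector $x^*(v_{(k)})$ displayed in the statement and certify it as the \emph{unique} minimizer of \eqref{eq:unit_wls} by verifying the Karush--Kuhn--Tucker conditions. First I would record that \eqref{eq:unit_wls} is a strictly convex quadratic program on an affine set: the matrix $Z$ has full column rank (its rows $z(\emptyset), z(\{1\}), \ldots, z(\{p\})$ are linearly independent), so $Z^\top W Z \succ 0$ and the constrained objective has a single stationary point, which is its global minimizer. It therefore suffices to check that $x^*$ is feasible and that the gradient $2 Z^\top W(Z x^* - v_{(k)})$ lies in the column space of $G^\top$, where $G = [z(\emptyset)^\top, z(N)^\top]^\top$; equivalently, that $(Z x^* - v_{(k)})^\top W Z d = 0$ for every feasible direction $d$ (those with $d_0 = 0$ and $\sum_{j=1}^p d_j = 0$).

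Before computing, I would exploit symmetry to cut the number of unknowns from $p+1$ to two. The weights in $W$ depend on a subset only through its cardinality, and the one-hot target $v_{(k)}$ is invariant under every permutation of the feature indices that stabilizes $s_{(k)}$ (permuting within $s_{(k)}$ and within its complement). Since $Z$ and both constraints are equivariant under relabeling features, uniqueness of the minimizer forces $x^*$ to inherit this invariance: $x^*_j$ takes a common value $\alpha$ for $j \in s_{(k)}$ and a common value $\beta$ for $j \notin s_{(k)}$, while $x^*_0 = v_{(k),\emptyset}$ is pinned directly by the constraint.

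With $a := |s_{(k)}|$, the additivity constraint reads $a\alpha + (p-a)\beta = v_{(k),N} - v_{(k),\emptyset}$. For the two endpoint cases $k \in \{1, 2^p\}$ all features lie in a single symmetry class, so this one equation together with $x^*_0 = v_{(k),\emptyset}$ already determines the unique symmetric feasible point, namely $\pm 1/p$, matching the formula (whose bracketed term is killed by the indicator $1\{k \neq 1, k \neq 2^p\}$). For interior $k$ I would obtain the second equation from stationarity along the feasible direction $d$ that raises one coordinate in $s_{(k)}$ and lowers one coordinate outside $s_{(k)}$; solving the resulting $2\times 2$ system should yield $\alpha = \frac1p\binom{p-1}{a-1}^{-1}$ and $\beta = -\frac1p\binom{p-1}{a}^{-1}$, exactly the claimed values. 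Feasibility is then the identity $a\binom{p-1}{a-1}^{-1} = (p-a)\binom{p-1}{a}^{-1}$, both sides of which expand to $a!(p-a)!/(p-1)!$.

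The hard part will be the stationarity computation in the interior case: evaluating the weighted residual sums $\sum_{s'} W_{s's'}\,(Z x^* - v_{(k)})_{s'}\,1\{j \in s'\}$ over all $2^p$ subsets. I would organize this by grouping subsets $s'$ according to the pair $(|s' \cap s_{(k)}|, |s' \setminus s_{(k)}|)$ and collapsing the resulting binomial sums. This is precisely the step in which the specific Shapley-kernel weights $\binom{p-2}{|s|-1}^{-1}$ are needed, since they are what make the Vandermonde-type identities telescope to the clean answer; everything else (convexity, uniqueness, the symmetry reduction, and the final feasibility identity) is routine. Once $x^*$ is shown to coincide with the expression in Lemma~\ref{lem:unit_class_shap}, linearity of both the classical Shapley map and the constrained least-squares solution in the predictiveness vector upgrades agreement on the one-hot basis $\{v_{(k)}\}$ to agreement for every $v_0$, completing the equivalence asserted in \eqref{eq:wls}.
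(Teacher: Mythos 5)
Your plan is correct, and it takes a genuinely different route from the paper's proof. The paper attacks the KKT system head-on: it forms the Lagrangian, solves for both Lagrange multipliers explicitly ($\lambda_1^*$ from the intercept row of the normal equations, $\lambda_2^*$ by summing the remaining $p$ rows), and back-substitutes to get one formula for $x^*_\ell$ covering all $k$ at once; symmetry of the weights appears only as a computational aid inside those row sums. You instead put strict convexity and equivariance up front: since $Z^\top W Z \succ 0$ the minimizer is unique, and invariance of the objective, target, and constraints under permutations stabilizing $s_{(k)}$ forces $x^*$ to be constant on $s_{(k)}$ and on its complement. That disposes of the endpoint cases $k \in \{1, 2^p\}$ with no computation and reduces the interior case to the additivity constraint plus stationarity along a single cross-class direction --- no multipliers at all. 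This is leaner than the paper's derivation; what the paper's brute-force solve buys in exchange is that it is fully self-contained, whereas your write-up defers its crux: ``solving the resulting $2\times 2$ system \emph{should} yield'' is exactly the step a referee would demand. For the record it does go through, and more easily than your proposed grouping by $(\lvert s' \cap s_{(k)}\rvert, \lvert s' \setminus s_{(k)}\rvert)$: pair each subset $s'$ with $i \in s'$, $j \notin s'$ against its swap $(s' \setminus \{i\}) \cup \{j\}$; the two have equal weight and their residuals differ by exactly $\alpha - \beta$, so the stationarity condition collapses to $(\alpha - \beta)\sum_{s' : i \in s', j \notin s'} w_{s'} = w_{s_{(k)}}$. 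Each cardinality class contributes $\binom{p-2}{t-1}\binom{p-2}{t-1}^{-1} = 1$ to that sum, so it equals $p-1$ --- precisely the ``difference-of-weights'' identity $\sum_{S : 1 \in S} w_S - \sum_{S : 1,2 \in S} w_S = p-1$ that the paper also invokes --- and solving together with $a\alpha + (p-a)\beta = 0$ gives $\alpha = \frac{1}{p}\binom{p-1}{a-1}^{-1}$ and $\beta = -\frac{1}{p}\binom{p-1}{a}^{-1}$, matching the lemma; your feasibility identity $a\binom{p-1}{a-1}^{-1} = (p-a)\binom{p-1}{a}^{-1}$ is then a consistency check rather than an extra input.
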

\begin{proof}
    For ease of notation, we use $x^*_j$ and $x^*_j(v_{(k)})$ interchangeably.
    Consider the Lagrangian of \eqref{eq:unit_wls}, given by
    \begin{align*}
        L(v_{(k)}, x, \lambda) =& \ \lVert \sqrt{W}(Zx - v_{(k)}) \rVert_2^2 + \lambda^\top (Gx - v_c),
    \end{align*}
    where $G = \begin{bmatrix} z(\emptyset) \\ z(N) - z(\emptyset) \end{bmatrix} = \begin{bmatrix} 1 & 0 & \ldots & 0 \\ 0 & 1 & \ldots & 1 \end{bmatrix}$ and $v_c = \begin{bmatrix} v_{(k),\emptyset} \\ v_{(k),N} - v_{(k),\emptyset}\end{bmatrix}$.
    Setting the gradient of the Lagrangian equal to zero, we find that $x^*$ must satisfy
    \begin{align*}
        \nabla_x L(v_{(k)}, x, \lambda) =& \ Z^\top W(Z x - v_{(k)}) + G^\top \lambda \stackrel{\text{set}}{=} 0 \\
        \Rightarrow 0 =& \ Z^\top W(Zx^* - v_{(k)}) + G^\top \lambda^* \\
        \nabla_\lambda L(v_{(k)}, x, \lambda) =& \ Gx - v_c \stackrel{\text{set}}{=} 0 \\
        \Rightarrow 0 =& \ Gx^* - v_c.
    \end{align*}
    This yields that
    \begin{align}\label{eq:unit_kkt}
        Z^\top W Z x^* = Z^\top W v_{(k)} - G^\top \lambda^*.
    \end{align}
    Note that $Z^\top W v_{(k)} = w_{s_{(k)}}z(s_{(k)})$, where $w_{s_{(k)}}$ is the weight for subset $s_{(k)}$, and for ease of notation we set $w_S = \binom{p-2}{\lvert S \rvert - 1}^{-1}$, with $w_{\emptyset} = 1$.
    We now find the value of $\lambda^*$.
    We denote the index of the first row of $Z^\top W Z x^*$ by zero, to match with $x^*_0$.
    Expanding the matrix notation in \eqref{eq:unit_kkt}, the first row of \eqref{eq:unit_kkt} states that
    \begin{align*}
        w_{s_{(k)}} - \lambda_1^* =& \ [Z^\top W Z x^*]_0 \\
        =& \ \left(\sum_{S \in \mathcal{S}} w_{S}\right)x_0^* + \sum_{j=1}^p \left(\sum_{S \in \mathcal{S} \, : \, j \in S} w_S\right)x_j^* \\
        =& \ \left(\sum_{S \in \mathcal{S}} w_{S}\right)v_{(k),\emptyset} \\
        &+ \left(\sum_{S \in \mathcal{S} \, : \, 1 \in S} w_S\right)(v_{(k),N} - v_{(k),\emptyset}),
    \end{align*}
    where we have made use of the constraints from \eqref{eq:unit_wls} and the symmetry of the weights.
    Thus,
    \begin{align*}
        \lambda^*_1 =& \ w_{s_{(k)}} - \left(\sum_{S \in \mathcal{S}} w_{S}\right)v_{(k),\emptyset} \\
        &\hspace{.2in}- \left(\sum_{S \in \mathcal{S} \, : \, 1 \in S} w_S\right)(v_{(k),N} - v_{(k),\emptyset}).
    \end{align*}
    For row $\ell = 1, \dots, p$, we have that
    \begin{align*}
        [Z^\top W Z x^*]_\ell =& \ \sum_{i=1}^{2^p}1\{\ell \in s(i)\}w_{s(i)}x^*_0 \\
        &+ \sum_{i=1}^{2^p}1\{\ell \in s(i)\}w_{s(i)}\sum_{j=1}^px_j^*1\{j \in s(i)\}) \\
        =& \ \left(\sum_{S\,:\, 1 \in S} w_S\right)x_0^* \\
        &+ \left(\sum_{S\,:\,1,2\in S}w_S\right)(v_{(k),N} - v_{(k),\emptyset}) \\
        &+ \left(\sum_{S\,:\,1\in S}w_S - \sum_{S\,:\,1,2\in S}w_S\right)x_\ell^*,
    \end{align*}
    using the symmetry of the weights.
    Thus, row $\ell$ of \eqref{eq:unit_kkt} is
    \begin{align}
        &w_{s_{(k)}}1\{\ell \in s_{(k)}\} - [G^\top \lambda^*]_\ell = [Z^\top W Z x^*]_\ell \notag\\
        \Rightarrow &w_{s_{(k)}}1\{\ell \in s_{(k)}\} - \lambda_2^* = \notag\\
        &\hspace{.2in}\left(\sum_{S\,:\, 1 \in S} w_S\right)x_0^* \notag\\
        &\hspace{.2in}+ \left(\sum_{S\,:\,1,2\in S}w_S\right)(v_{(k),N} - v_{(k),\emptyset}) \notag\\
        &\hspace{.2in}+ \left(\sum_{S\,:\,1\in S}w_S - \sum_{S\,:\,1,2\in S}w_S\right)x_\ell^*. \label{eq:row_ell}
    \end{align}
    Summing \eqref{eq:row_ell} across $\ell = 1, \dots, p$ yields
    \begin{align*}
        \lambda_2^* =& \ \frac{1}{p}\sum_{\ell = 1}^p\bigg[w_{s_{(k)}}1\{\ell \in s_{(k)}\} - \left(\sum_{S\,:\, 1 \in S} w_S\right)x_0^* \notag \\
        &\hspace{.5in}- \left(\sum_{S\,:\,1,2\in S}w_S\right)(v_{(k),N} - v_{(k),\emptyset}) \notag \\
        &\hspace{.5in}-\left(\sum_{S\,:\,1\in S}w_S - \sum_{S\,:\,1,2\in S}w_S\right)x_\ell^* \bigg]\notag \\
        =& \ \frac{1}{p}w_{s_{(k)}}\lvert s_{(k)}\vert - \left(\sum_{S\,:\, 1 \in S} w_S\right)v_{(k),\emptyset} \\
        &- \left(\sum_{S\,:\,1,2\in S}w_S + \frac{p-1}{p}\right)(v_{(k),N} - v_{(k),\emptyset}),
    \end{align*}
    where we have again made use of the constraints and the symmetry of $W$, as well as the difference-of-weights result that $\sum_{S\,:\,1\in S}w_S - \sum_{S\,:\,1,2\in S}w_S = (p-1)$.
    Plugging this result into \eqref{eq:row_ell} and rearranging terms yields that, for each $\ell = 1, \dots, p$,
    \begin{align}
        x_\ell^* =& \ \frac{w_{s_{(k)}}}{p-1}\left\{1\{\ell \in s_{(k)}\} - \frac{1}{p}\lvert s_{(k)} \rvert\right\} + \frac{1}{p}(v_{(k),N} - v_{(k),\emptyset}), \label{eq:x_ell}
    \end{align}
    where we have again made use of the constraints, the symmetry of $W$, and the difference-of-weights result.

    Note that for $k = 2, \dots, 2^p - 1$, $v_{(k),N} = v_{(k),\emptyset} = 0$.
    Thus, for $k = 2, \dots, 2^p - 1$, and $\ell = 1, \ldots, p$, if $\ell \in s_{(k)}$ then
    \begin{align*}
        x_\ell^* =& \ \frac{w_{s_{(k)}}}{p-1}\left\{1 - \frac{1}{p}\lvert s_{(k)} \rvert\right\} = \frac{1}{p}\binom{p-1}{\lvert s_{(k)} \rvert - 1}^{-1};
    \end{align*}
    if $\ell \notin s_{(k)}$ then
    \begin{align*}
        x_\ell^* =& \ \frac{w_{s_{(k)}}}{p-1}\left\{- \frac{1}{p}\lvert s_{(k)} \rvert\right\} = -\frac{1}{p}\binom{p-1}{\lvert s_{(k)} \rvert}^{-1}.
    \end{align*}
    Also, \eqref{eq:x_ell} implies that if $k = 1$ then $x_\ell^* = -\frac{1}{p}$, while if $k = 2^p$ then $x_\ell^* = \frac{1}{p}$.
    Thus,
    \begin{align*}
        x_\ell^*(v_{(k)}) =& \ \frac{1}{p}1\{k\neq 1, k\neq 2^p\}\\
        &\times \bigg[\binom{p-1}{\lvert s_{(k)} \rvert - 1}^{-1}1\{\ell \in s_{(k)}\}\\
        &\hspace{.2in}- \binom{p-1}{\lvert s_{(k)} \rvert}^{-1}1\{\ell \notin s_{(k)}\}\bigg] \\
        &+ \frac{1}{p}(v_{(k),N} - v_{(k),\emptyset}),
    \end{align*}
    precisely what we wished to show.
\end{proof}
Combining the results of Lemma~\ref{lem:unit_class_shap} and Lemma~\ref{lem:unit_wls}, we have that $x_j^*(v_{(k)}) = \psi_j(v_{(k)})$ for all one-hot vectors $v_{(k)}$, $k = 1, \dots, 2^p$.
Thus, the Shapley values are equivalent to the solution of the weighted least squares problem.

\subsection{SHAP values versus SPVIM}\label{sec:shap_v_spvim}
Under certain conditions, the mean absolute SHAP value is related to the SPVIM value.
Recall that for each feature subset $s \subseteq N$ and corresponding fitted models $\hat{f}_{s}$, the SHAP value for the $j$th feature at $x$ is defined as
$$
\sum_{s \in N \setminus \{j\}}\frac{1}{p} \binom{p-1}{\lvert s \rvert}^{-1}\{\hat{f}_{s\cup j}(x) - \hat{f}_{s}(x) \}.
$$
Suppose there exists a factor $c> 0$ such that for all feature subsets $s$, the scaled norm between oracle prediction models $f_{0, s \cup j}$ and $f_{0, s }$ provides a lower bound on the difference between their predictiveness measures, i.e.,
\begin{align}
\|f_{0,s  \cup j } - f_{0,s}\|_1 \lesssim c \left ( V(f_{0, s \cup j }, P_0) - V(f_{0, s}, P_0) \right ).
\label{eq:lower_bound}
\end{align}
Then it is easy to show that the mean absolute SHAP value for the oracle model implies large SPVIM values.
The lower bound \eqref{eq:lower_bound} holds if the predictiveness measure $V$ is convex in its first argument, such as when $V$ is the mean squared error.

\section{Additional numerical results}

In the main manuscript, we ran a 200-variable simulation with a continuous outcome. In Figure \ref{fig:add_sim_res}, we provide the estimated SPVIM value and mean absolute SHAP value for each sample size and feature considered in that analysis. The vertical bars denote the Monte-Carlo error based on 1000 replicates of the experiment for each sample size.

\begin{figure*}
    \centering
    \includegraphics[width = 0.8\textwidth]{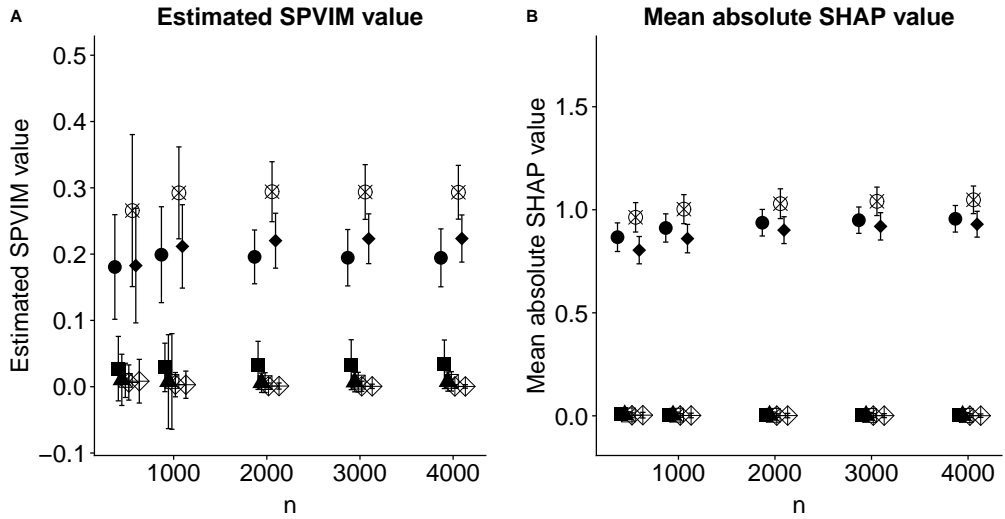}
    \caption{Average estimated SPVIM values (left) and mean absolute SHAP values (right) with Monte-Carlo error bars for the two hundred-variable simulation conducted in the main manuscript.
	Filled and crossed circles and filled and crossed diamonds denote $X_1$, $X_3$, $X_5$, and $X_6$ and $X_{14}$, respectively; filled squares, filled triangles, and crossed squares denote $X_{11}$, $X_{12}$, and $X_{13}$, respectively.
	The true SPVIM values are approximately $(0.192, 0.291, 0.228, 0.037, 0.01, 0.01, 0)$ for the non-noise features (1, 3, 5, 11, 12, 13, 14) and zero for $X_6$.}
    \label{fig:add_sim_res}
\end{figure*}

\section{Additional details for predicting mortality of patients in the intensive care unit}\label{sec:additional_details_icu}

In this section, we describe our analysis of data on patients' stays in the intensive care unit (ICU) \citep{silva2012predicting} in more detail.

First, we computed the minimum, weighted mean, and maximum value of the 15 time-series variables presented in Table~\ref{table:icu_features}.
The weighted mean corresponds to a linear regression fit to the time series.
We then dropped any variable that had a proportion of missing values greater than or equal to 30\%.
This procedure resulted in a total of 37 features that we used to predict mortality: the summaries of the time-series variables along with all general descriptors measured at admission.

\begin{table*}
	\begin{center}
        \begin{tabular}{c|l|l|l}
        	Variable group & Variable name & Summary measure & Included in analysis${}^{1}$ \\
        	\toprule
        	Glasgow Coma Scale (GCS) & GCS & min, weighted mean${}^{2}$, max & Included\\
        	\hline
        	\multirow{6}{*}{Metabolic panel}  & HCO3 (serum bicarbonate) & min, weighted mean, max & Included\\
        	& BUN (blood urea nitrogen) & min, weighted mean, max & Included\\
        	& Na (serum sodium) & min, weighted mean, max & Included\\
        	& K (serum potassium) & min, weighted mean, max & Included\\
        	& Glucose & min, weighted mean, max & Included\\
        	\hline
        	Systolic arterial blood pressure (SysABP) & SysABP & min, weighted mean, max & Not included\\
        	\hline
        	\multirow{2}{*}{Complete blood count test} & White blood cell count (WBC) & min, weighted mean, max & Included\\
        	& Hematocrit (HCT) & min, weighted mean, max & Included\\
        	\hline
        	Temperature (Temp) & Temp & min, weighted mean, max & Included\\
        	\hline
        	Lactate & Lactate & min, weighted mean, max & Not included\\
        	\hline
        	Heart rate (HR) & HR & min, weighted mean, max & Included\\
        	\hline
        	\multirow{3}{*}{Respiration} & Respiration rate (RespRate) & min, weighted mean, max & Not included\\
        	& Mechanical ventilation (MechVent) & min, weighted mean, max & Not included\\
             & O2 (oxygen) & ratio of FiO2, PaO2 & Not included \\
        	\hline
        	Urine output & Urine & min, weighted mean, max & Included\\
        	\hline
        	\multirow{5}{*}{General descriptors} & Gender & identity${}^{3}$ & Included \\
        	& Height & identity & Not included \\
        	& Weight & identity & Included \\
        	& Age & identity & Included \\
        	& ICU admission type & identity & Included \\
        \end{tabular}
    \end{center}
    \caption{
    Available features in the MIMIC-II database, along with summary measures computed and an indicator of whether or not the feature was included in the analysis. Impossible values (e.g., $\leq 0$ for many variables) were dropped. Summary measures (minimum value, weighted mean, and maximum value) were computed for all time-series variables. Any variable with proportion missing $> 0.3$ was not included in the analysis, leading to a final analysis dataset with 37 variables. \\
    ${}^{1}$Features with a proportion of missing values $> 0.3$ were dropped from the analysis. \\
    ${}^{2}$Estimated response at mean measurement time from a linear regression of response on time. \\
    ${}^{3}$All general descriptors were measured a single time, at admission.
    }
    \label{table:icu_features}
\end{table*}

{\small
\bibliographystyle{plainnat}
\bibliography{brian-papers,shapley_bib}
}

\end{document}